\title{Graph-theoretical Bounds on the Entangled Value of Non-local Games}
\author[1]{Andr\'e Chailloux}
\author[2]{Laura Man\v{c}inska}
\author[3]{Giannicola Scarpa}
\author[4]{Simone Severini}
\affil[1]{SECRET Project Team, INRIA Paris-Rocquencourt,\\
   Paris, France\\
  \texttt{andre.chailloux@inria.fr}}
\affil[2]{Center for Quantum Technologies,\\
   Singapore\\
  \texttt{cqtlm@nus.edu.sg}}
\affil[3]{Universitat Aut\`onoma de Barcelona,\\
   Barcelona, Spain\\
  \texttt{giannicola.scarpa@uab.cat}} 
\affil[4]{University College London,\\
   London, United Kingdom\\
  \texttt{simoseve@gmail.com}}
\authorrunning{A.\,Chailloux, L.\,Man\v{c}inska, G.\,Scarpa and S.\,Severini} 
\subjclass{G.2.3 Applications}
\keywords{Graph theory, non-locality, entangled games}
\def\01{\{0,1\}}
\newcommand{\ket}[1]{|#1\rangle}
\newcommand{\bra}[1]{\langle#1|}
\newcommand{\inp}[2]{\langle{#1}|{#2}\rangle} 
\newcommand{\Tr}{\mbox{\rm Tr}}
\newcommand{\C}{\mathbb{C}}
\newcommand{\supp}{\operatorname{supp}}
\newcommand{\ie}{\emph{i.e.}}
\begin{document}
\maketitle

\begin{abstract}
We introduce a novel technique to give bounds to the entangled value of non-local games. The technique is based on a class of graphs used by Cabello, Severini and Winter in 2010. The upper bound uses the famous Lov\'asz theta number and is efficiently computable; the lower one is based on the quantum independence number, which is a quantity used in the study of entanglement-assisted channel capacities and graph homomorphism games.
\end{abstract}  

\section{Introduction}

In non-local games, two non-communicating players cooperate in order to achieve a task. Each player receives an input and produces an output, and they must satisfy the task's requirements. 

In physics, this class of games is also known as ``entangled games''. They are mostly used to investigate the power of entanglement, by designing intuitive Bell inequalities. One designs a non-local game and proves an upper bound on the winning probability of the classical players (the Bell inequality). Later, one shows that there exists a quantum strategy that by using entanglement can beat that winning probability. Two famous examples of such approach are the CHSH game (based on \cite{chsh}) and the magic square game (based on \cite{peres}).

Non-local games are also important  in computer science, where they are usually called ``two-prover one-round games''. Their intuitive nature has been used in complexity theory to approach the difficult problem of P vs. NP, by defining probabilistically checkable proofs and ultimately leading to the famous unique games conjecture \cite{Khot02,Khot10}. 

Estimating or bounding the value of a game given its description is an important task, and much effort has been devoted to the question. For example, the entangled value for the class of XOR games has been shown to be easy to compute with a semidefinite program by Cirel'son \cite{Tsirelson80}. Also the entangled value of unique games turns out to be easy to compute, therefore falsifying the unique games conjecture in the quantum world \cite{KRT08}.

Here, we propose a general approach to bound the value of a non-local game based on graph theory. 
Given the description of a game, we construct a graph that contains all the information about the game, and we call it the ``game graph''. The construction is based on the techniques in \cite{CSW10}.
Such techniques have also been extended and used in \cite{acin}.

We first show that the classical value of any game is equal to the independence number of its game graph (renormalized). This reflects the fact that computing exactly the classical value of a game is NP-hard.
We then show an efficiently computable upper bound on the quantum value of a game (and therefore on the classical value), given by the celebrated Lov\'asz theta number. 
We then give lower bound for the games on the uniform distribution given by the quantum independence number, a graph parameter introduced in \cite{CLMW09} and futher discussed in \cite{MSS13,Roberson:2012}. 
To conclude, we give  a class of games for which this upper bound is tight.

We believe this graph-theoretical approach is an important and a fertile field for improvements. We discuss these in the conclusions section.

\section{Preliminaries} \label{sec:defch3}

\subsection{Non-local games}

We now briefly describe the setting of a non-local game $\mathcal G$.

Alice and Bob are separated and forbidden to communicate. They receive inputs $x$ and $y$ from some input sets $X$ and $Y$, according
to some fixed and known probability distribution~$\pi$, and are required to produce outputs $a$ and $b$ from output sets $A$ and $B$, respectively.
The game rules are encoded in a predicate $\lambda: X\times Y\times A\times B \rightarrow \{0,1\}$, which specifies which outputs $a,b$ are correct on inputs $x,y$. In other words, players win the game on inputs $x,y$ if they output some $a,b$ such that $\lambda(x,y,a,b)=1$. 
The goal of the players is to maximize the winning probability.

A \emph{classical strategy} for the game is without loss of generality a pair of functions, $f_A : X \to A$ for Alice and $f_B: Y \to B$ for Bob. (Shared randomness between the two players is easily seen not to be beneficial.)
The winning probability of a strategy is calculated as follows: 
\[ \sum_{x,y} \pi(x,y) \lambda(x,y,f_A(x),f_B(y)).\]
The \emph{classical value} $\omega(\mathcal{G})$ of the game is the maximum winning probability among all classical strategies.

In \emph{entangled strategies} (a.k.a.\ quantum strategies), players share a fixed (\ie, independent of the inputs) entangled state $\ket{\psi}$.
For each input $x$, Alice has a projective measurement  $\{P^x_a\}_{a\in A}$, and for each input $y$, Bob has a projective measurement $\{Q^y_b\}_{b\in B}$. Upon receiving the inputs, they apply the corresponding measurements to their parts of the entangled state and produce classical outputs $a$ and $b$, respectively.
The winning probability of a strategy is calculated as follows: 
\[
\sum_{xy} \pi(x,y) \lambda(x,y,a,b) \langle\psi|P^x_a\otimes Q^y_b|\psi\rangle.
\]

The \emph{entangled value} $\omega^*(\mathcal{G})$ of the game is the supremum of the winning probability,
taken over all entangled strategies.

A Bell inequality for a game is an upper bound on its classical value. We have a Bell inequality violation for a game $\mathcal G$ if the entangled value is strictly larger than the classical one. The violation is quantified by the ratio $\omega^*(\mathcal{G}) / \omega(\mathcal{G})$. 

The CHSH game is one particularly famous example~\cite{chsh}.
Here, the inputs $x\in\01$ and $y\in\01$ are uniformly
distributed, and Alice and Bob win the game if their respective outputs $a\in\01$ and $b\in\01$
satisfy $a\oplus b=x\wedge y$; in other words, $a$ should equal $b$ unless $x=y=1$.
The classical value of this game is easily seen to be $\omega(\mathcal{G})=3/4$, while the
entangled\index{entanglement} value is known to be $\omega^*(\mathcal{G})=1/2+1/(2\sqrt{2})\approx 0.85$.

A non-local game is said to be a \emph{pseudo-telepathy} game if the quantum value is 1 while the classical value is strictly less than 1.

\subsection{Notions of graph theory} \label{sec:graphs_intro}

A \emph{simple graph} $G=(V,E)$ consists of a finite vertex set $V$ and its edge set $E \subsetneq V\times V$ (the inclusion here is strict because there are no edges of the form $(v,v)$). Two vertices $(v,w) \in E$ are ``adjacent'' or equivalently ``form an edge''.
All graphs considered here are simple graphs. 
For a graph $G=(V,E)$, we also denote its vertex set with $V(G)$ and its edge set with $E(G)$ whenever confusion has to be avoided. 

An \emph{independent set} of a graph is a subset $I$ of $V(G)$  such that no two elements of $I$ are adjacent.
The \emph{independence number} of a graph $G$, denoted by $\alpha(G)$, is the maximum size of an independent set of $G$.


%

A $d$-dimensional orthogonal representation\index{orthogonal representation} of $G=(V,E)$ is a map $\phi: V \rightarrow \C^d$ 
such that for all $(v,w)\in E$, $ \inp{\phi(v)}{\phi(w)} = 0$.  (If all the vectors have unit norm, this is called orthonormal representation.)

We finally introduce an important graph parameter: the \emph{theta number} (a.k.a.\ Lov\'{a}sz number or theta function). \index{Lov\'{a}sz!$\vartheta $ number} \index{Lov\'{a}sz}
It  was originally defined by Lov\'{a}sz~\cite{Lovasz} to solve a long-standing problem posed by Shannon~\cite{Shannon:1956}: computing the Shannon capacity\index{Shannon!capacity} of the five-cycle.
There are many equivalent formulations of the theta number (see \cite{Knuth:1993} for a detailed survey).
The one that we use in this paper is the following:
\begin{equation} \label{eq:theta_def1}
\vartheta(G)=\max\sum_{v\in V(G)} |\langle\psi|\psi_{v}\rangle|^{2},
\end{equation}
where the maximum is over
unit vectors $\psi$ and orthonormal representations $\{\psi_{v}\}_{v\in V(G)}$.
Lov\'asz~\cite{Lovasz} proved that $\alpha(G)\leq \vartheta(G)$ holds (this inequality is part of the so-called ``sandwich theorem''~\cite{Knuth:1993}).
The theta number can be approximated to within arbitrary precision in polynomial time, hence it gives a tractable and in many cases useful bound for $\alpha$.

\subsection{Quantum Independence Number}

In this section we define the quantum independence number\index{independence number!quantum}  and state some of its properties.

First, let us briefly give some historical background. In~\cite{MSS13} the concept of quantum independence number is presented in the context of zero-error information theory. This quantity is usually called in literature ``one-shot zero-error entanglement-assisted channel capacity'' and denoted as $\alpha^{*}$. 
A new definition of quantum independence number, denoted as $\alpha_q$, came in~\cite{Roberson:2012}, in the context of graph homomorphisms. As of today, it is not known if the two quantities are equal for all graphs. 
In this paper we use the second quantity, but for simplicity we omit the details about homomorphisms and provide a direct definition.

As with the quantum chromatic number (see \cite{qchrom}), the quantum independence number can be defined in terms of a non-local game. 
Informally, the \emph{independent set game}\index{independent set game} with parameter $t$ for a graph $G=(V,E)$ is as follows.
Two players, Alice and Bob, claim that they know an independent set $I$ of $G$ consisting of $t$ vertices. 
A referee wants to test this claim with a non-local game. He forbids communication between the players, generates two uniformly random numbers $x,y \in [t]$
and separately asks Alice to provide the $x$-th vertex of $I$ and Bob to provide the $y$-th vertex of $I$. 
The players are required to output the same vertex if $x=y$, and to output non-adjacent vertices if $x\neq y$. A formal definition follows.

\begin{definition}\index{non-local game!independent set}
The \textit{independent set game with parameter $t$} on the graph $G=(V,E)$ is a non-local game with input sets $X=Y=[t]$, output sets $A=B=V$. The probability distribution $\pi$ is the uniform distribution on the input pairs. Alice gets input $x$ and outputs $v$, Bob gets input $y$ and outputs $w$. The players \emph{lose the game} in the following two cases:
\begin{enumerate}
\item $x=y$ and $v\neq w$
\item $x\neq y$ and $(v,w)\in E$ or $v=w$
\end{enumerate}
\end{definition}

A classical strategy consists w.l.o.g.\ of two deterministic functions $f_A: [t] \rightarrow V$ for Alice and 
\mbox{$f_B: [t] \rightarrow V$} for Bob. Shared randomness, as seen for the coloring game, is not beneficial.
A little thought will show that to win with probability 1, we must have $f_A = f_B$ 
(to avoid the first losing condition) and that $\{f_A(1),\dots, f_A(t)\} $ must be a valid independent set of the graph of size $t$ (to avoid the second losing condition). 
It follows that the classical players cannot win the game with probability $1$ when $t>\alpha(G)$. 

It is proven in~\cite{Roberson:2012} that w.l.o.g.\  quantum strategies for the independent set game\index{non-local game!independent set}
consist of projective measurements on a maximally entangled\index{entanglement!maximally entangled} state, that the projective measurements of Alice and Bob are the same and that all the projectors can be real-valued. Therefore we can define a \emph{quantum independent set} of size $t$ as a collection of $t$ projective measurements $\{P_{v}^{x}\}_{v\in V}$ for all $x\in [t]$ that have the whole vertex set as outputs, 
with the following consistency condition:
\begin{equation} \label{eq:consistency_indep}
\mbox{for all } (u,v) \in E \mbox{ or } u=v \mbox{ and for all } x\neq x', \quad P_{u}^{x} P_{v}^{x'}=0.
\end{equation}

\begin{definition} For all graphs $G$, the \emph{quantum independence number}\index{independence number!quantum}  $\alpha_q(G)$
 is the maximum number $t$ such that there exists a quantum independent set of $G$ of size $t$.
\end{definition}

\section{Game graphs}

\subsection{Definition and relation to $\omega(\mathcal{G})$}

Consider a non-local game $\mathcal{G}$ with input sets $X,Y$, output sets $A,B$, predicate $\lambda: X\times Y\times A\times B \rightarrow \{0,1\}$ and uniform distribution on the inputs.

\begin{definition} \label{def:gamegraph} \index{non-local game!graph associated to}
A graph $G=\left(V,E\right)$ \emph{associated} to the game $\mathcal{G}$ has:
\begin{enumerate}
\item $V=\{xyab \mid x\in X,y\in Y,a\in A,b\in B \mbox{ and }\lambda(x,y,a,b)=1 \}$,
\item $E=\{\{xyab,x^{\prime}y^{\prime}a^{\prime}b^{\prime}\}\mid (x=x^{\prime}\wedge a\neq a^{\prime})\vee(y=y^{\prime}\wedge b\neq b^{\prime})\}$.
\end{enumerate}
\end{definition}

This definition is inspired by a construction in~\cite{CSW10} in the framework of contextuality of physical theories. The authors used something similar to Definition \ref{def:gamegraph} for the special case of the CHSH game.\index{non-local game!CHSH} Here we generalize to all games.

For simplicity, we prove the results in this section for the case where the game has the uniform distribution on the inputs and $\lambda$ is a boolean function. It is straightforward to generalize to games with real-valued predicate and any probability distribution $\pi$ of the inputs, as follows. 
Consider the (vertex) weighted graph with all the quadruples $xyab$ in the vertex set, labelled with $\mbox{weight}(xyab)=\lambda(x,y,a,b)\cdot\pi(x,y)$, and the same edge set as before.
The classical bound and the Lov\'asz theta bound that we will prove later can be adapted by considering the weighted versions of these parameters. However, we do not know how to generalize our last result because we do not define the quantum independence number for a weighted graph. 
\index{Lov\'{a}sz!$\vartheta $ number}

Now we prove that that the classical value of a game can be expressed in terms of the independence number of its game graph.\index{independence number}
\begin{theorem}
\label{theorem:omega_c}Let $\mathcal{G}$ be a non-local game with input sets $X$ and $Y$,  uniform input distribution and associated
graph $G$. Then 
\[
\omega(\mathcal{G})= \frac{\alpha(G)} {|X\times Y|}.
\]
\end{theorem}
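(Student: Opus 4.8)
The plan is to establish the equality by constructing an explicit bijection between classical strategies for $\mathcal{G}$ and independent sets in the game graph $G$, in such a way that the winning probability of a strategy is exactly $|I|/|X\times Y|$ where $I$ is the corresponding independent set. First I would take an arbitrary classical strategy, i.e.\ a pair of functions $f_A:X\to A$ and $f_B:Y\to B$, and associate to it the set $I_{f_A,f_B}=\{xy\,f_A(x)\,f_B(y) \mid x\in X,\ y\in Y,\ \lambda(x,y,f_A(x),f_B(y))=1\}$. I claim this is an independent set of $G$: any two distinct vertices in it have the form $xy\,f_A(x)\,f_B(y)$ and $x'y'\,f_A(x')\,f_B(y')$, and if $x=x'$ then $f_A(x)=f_A(x')$ so the first clause of the edge relation fails, and similarly the second clause fails when $y=y'$; hence no edge. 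Moreover $|I_{f_A,f_B}|$ is precisely the number of input pairs $(x,y)$ on which the strategy wins, so dividing by $|X\times Y|$ (the uniform weight normalisation) gives the winning probability. Maximising over strategies then yields $\omega(\mathcal{G}) \le \alpha(G)/|X\times Y|$.

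For the reverse direction I would start from an independent set $I\subseteq V(G)$ and show it ``comes from'' a strategy that wins on at least $|I|$ input pairs. The key observation is that independence forces a partial-function structure: if $xyab\in I$ and $xy'a'b'\in I$ with $a\ne a'$, the first edge clause would put an edge between them (taking $y=y'$ is not even required — wait, it is: the clause is $x=x'\wedge a\ne a'$), so within $I$ the value $a$ is determined by $x$ alone, and likewise $b$ is determined by $y$ alone. Thus there are partial functions $g_A$ and $g_B$ with $g_A(x)=a$, $g_B(y)=b$ for every $xyab\in I$, and every vertex of $I$ lies over a distinct input pair $(x,y)$ (two vertices over the same $(x,y)$ with the same outputs would coincide). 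Extending $g_A,g_B$ arbitrarily to total functions $f_A,f_B$ gives a classical strategy which, on each input pair $(x,y)$ such that $xy\,g_A(x)\,g_B(y)\in I$, outputs a correct pair (since vertices of $G$ satisfy $\lambda=1$), so it wins on at least $|I|$ pairs. Taking $I$ of maximum size gives $\omega(\mathcal{G})\ge \alpha(G)/|X\times Y|$, completing the proof.

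The two directions together give the claimed equality. I expect the main (minor) obstacle to be stating the ``partial function'' argument cleanly — one must be careful that an independent set need not be of the form $I_{f_A,f_B}$ for a \emph{total} strategy (it can be a proper subset), and that the edge relation genuinely forces the output on input $x$ to be single-valued across \emph{all} vertices of $I$ with first coordinate $x$, not merely those sharing also the second coordinate. Once that is pinned down, the counting is immediate. A remark worth adding: this makes exact computation of $\omega(\mathcal{G})$ at least as hard as computing $\alpha$, explaining the NP-hardness mentioned in the introduction, and it also shows the normalising factor $|X\times Y|$ is exactly the number of equiprobable input pairs under $\pi$.
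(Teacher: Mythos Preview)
Your proposal is correct and follows essentially the same approach as the paper: both directions are proved by the same map between deterministic strategies and independent sets, using that the edge clause $x=x'\wedge a\neq a'$ forces the output $a$ to depend only on $x$ within any independent set (and symmetrically for $b$ and $y$). The paper presents the two directions in the opposite order and is terser about the partial-function extension, but the argument is identical.
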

\begin{proof}
Let $k=|X\times Y|.$ We begin by proving
that $\omega(\mathcal{G})\geq\alpha(G)/k$. Namely, we show that
given a maximal independent set $I\subseteq V$ of size $\ell$,
we can exhibit a strategy for $\mathcal{G}$ that answers correctly
to at least $\ell$ of the $k$ questions. By the structure of $G$,
the independent set~$I$ cannot contain vertices $xyab$ and $xy^{\prime}a^{\prime}b^{\prime}$
such that $a\neq a^{\prime}$. Similarly, $I$ cannot contain vertices
$xyab$ and $x^{\prime}ya^{\prime}b^{\prime}$ such that $b\neq b^{\prime}$.
Hence, we have the following strategy: on input $x$, Alice outputs
the unique $a$ determined by the vertices in the independent set~$I$. Bob behaves similarly. Since $V$ contains only winning quadruples $xyab$, the
size $\ell$ of the independent set means Alice and Bob answer correctly
to at least $\ell$ input pairs. Hence, $\omega(\mathcal{G})\geq \ell/k$. 

Now we show that $\omega(\mathcal{G})\leq\alpha(G)/k$, \emph{i.e.},
if there exists a strategy that wins on $\ell$ of the
$k$ input pairs, then there exists an independent set with weight
$\ell$. 
We have that w.l.o.g.~classical strategies consist of a pair of functions. Fix Alice and Bob's  functions $f_{A}$ and $f_{B}$ that win on $\ell$ input pairs. Now take the set of quadruples $S=\{(x,y,f_{A}(x),f_{B}(y))\}_{x\in X, y\in Y}$.
We have that $I = S\cap V$ is a set of $\ell$ vertices of $G$.  
Since $f_A$ and $f_B$ are deterministic, $I$ cannot contain vertices
$xyab$ and $xy^{\prime}a^{\prime}b^{\prime}$
such that $a\neq a^{\prime}$ 
nor vertices $xyab$ and $x^{\prime}ya^{\prime}b^{\prime}$ such that $b\neq b^{\prime}$. Therefore, there cannot be an edge between any pair
of the elements of $I$ and we have that $I$ is an independent set of $G$ of size $\ell$. 
Hence, $\alpha(G) \geq \ell$.
Combining the two directions proves the theorem. 
\end{proof}

\subsection{Bounds on the entangled value of a game}

Cabello, Severini and Winter \cite{CSW10}\index{Cabello} observe that the quantum value of the CHSH game\index{non-local game!CHSH} is equal to the theta number of its associated graph divided by the number of questions.
We have found by direct calculation that this is not always true for general games, for example in the case of the 2-fold parallel repetition of CHSH.
The same conclusion follows from the results of Ac\'{i}n \emph{et al.} in \cite{acin}.
Here we prove the upper bound directly for our specific constructions.

\begin{theorem}\label{theorem:omega_q_lt_theta}
Let $\mathcal{G}$ be a non-local game with input sets $X$ and $Y$, uniform input distribution and associated
graph $G=(V,E)$. Then 
\[\omega^*(\mathcal{G})\leq\frac{\vartheta(G)}{|X\times Y|}.\]
\end{theorem}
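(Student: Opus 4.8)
The plan is to convert an arbitrary entangled strategy into a unit ``handle'' vector together with an orthonormal representation of the game graph $G$, and then to read off the bound from the formula \eqref{eq:theta_def1} for $\vartheta(G)$.

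Fix an entangled strategy: a shared unit vector $\ket{\psi}\in\mathcal{H}_A\otimes\mathcal{H}_B$, projective measurements $\{P^x_a\}_{a}$ for Alice and $\{Q^y_b\}_{b}$ for Bob. Since the input distribution is uniform, the winning probability of this strategy is $\frac{1}{|X\times Y|}\sum_{v\in V}p_v$, where for a vertex $v=xyab\in V$ I write $p_v:=\langle\psi|P^x_a\otimes Q^y_b|\psi\rangle$ and I use that the vertex set $V$ is precisely the set of winning quadruples. So it suffices to prove $\sum_{v\in V}p_v\leq\vartheta(G)$.

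For each $v=xyab\in V$ with $p_v>0$, put $\ket{\psi_v}:=\frac{1}{\sqrt{p_v}}\,(P^x_a\otimes Q^y_b)\ket{\psi}$. Note that $P^x_a\otimes Q^y_b=(P^x_a\otimes\I)(\I\otimes Q^y_b)$ is a product of two commuting projectors, hence itself a projector, so $\norm{(P^x_a\otimes Q^y_b)\ket{\psi}}^2=\langle\psi|P^x_a\otimes Q^y_b|\psi\rangle=p_v$ and $\ket{\psi_v}$ is a unit vector. (The finitely many vertices $v$ with $p_v=0$ are dealt with by passing to a sufficiently large Hilbert space and assigning them pairwise orthogonal unit vectors orthogonal to everything else; they contribute $0$ to the sum and do not interfere with anything below, and the same enlargement lets one work in a single finite-dimensional space even if $\mathcal{H}_A\otimes\mathcal{H}_B$ is infinite-dimensional.) I claim $\{\ket{\psi_v}\}_{v\in V}$ is an orthonormal representation of $G$. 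Indeed, if $(v,w)\in E$ with $v=xyab$ and $w=x'y'a'b'$, then by Definition~\ref{def:gamegraph} either $x=x'$ and $a\neq a'$, so that $P^x_a P^{x'}_{a'}=P^x_a P^x_{a'}=0$ (distinct outcomes of one projective measurement), or $y=y'$ and $b\neq b'$, so that $Q^y_b Q^{y'}_{b'}=0$; in either case $(P^x_a\otimes Q^y_b)(P^{x'}_{a'}\otimes Q^{y'}_{b'})=(P^x_a P^{x'}_{a'})\otimes(Q^y_b Q^{y'}_{b'})=0$, whence $\inp{\psi_v}{\psi_w}=0$.

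It remains to choose the handle, and I take $\ket{\psi}$ itself: for $v=xyab$ with $p_v>0$ we get $\inp{\psi}{\psi_v}=\frac{1}{\sqrt{p_v}}\langle\psi|P^x_a\otimes Q^y_b|\psi\rangle=\sqrt{p_v}$, so $|\inp{\psi}{\psi_v}|^2=p_v$, and this also holds trivially when $p_v=0$. Hence $\sum_{v\in V}|\inp{\psi}{\psi_v}|^2=\sum_{v\in V}p_v$, and since $\ket{\psi}$ is a unit vector and $\{\ket{\psi_v}\}$ is an orthonormal representation of $G$, the definition \eqref{eq:theta_def1} of $\vartheta(G)$ yields $\sum_{v\in V}p_v\leq\vartheta(G)$. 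Dividing by $|X\times Y|$ and taking the supremum over all entangled strategies proves the theorem. There is no serious obstacle: the only care needed is the bookkeeping around vanishing $p_v$ and around dimension, while the substantive content — that the post-measurement states of a strategy form an orthonormal representation of the game graph with the shared state as handle — is immediate once the construction is written down. It is equally clear why the resulting inequality need not be an equality, since the representations produced this way form a rather special subfamily of those competing in \eqref{eq:theta_def1}.
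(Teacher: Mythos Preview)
Your proof is correct and follows essentially the same approach as the paper: normalize the post-measurement vectors $(P^x_a\otimes Q^y_b)\ket{\psi}$ to obtain an orthonormal representation of $G$, take the shared state $\ket{\psi}$ as the handle, and read off the bound from \eqref{eq:theta_def1}. You are in fact slightly more careful than the paper, which tacitly assumes the optimal strategy is attained and does not discuss the vertices with $p_v=0$.
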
 
\begin{proof}
Let $k = |X\times Y|$.
Consider a quantum strategy for $\mathcal{G}$ that achieves the value
$\omega^*(\mathcal{G})$. It consists of a shared entangled\index{entanglement} state
$|\psi\rangle$ and a collection of projective measurements $\{P^x_a\},\{Q^y_b\}$,
such that 
\[
\sum_{xyab}\frac{1}{k}\lambda(x,y,a,b)\langle\psi|P^x_a\otimes Q^y_b|\psi\rangle=
\frac{1}{k} \sum_{xyab \in V} \langle\psi|P^x_a\otimes Q^y_b|\psi\rangle=
\omega^*(\mathcal{G}).
\]

For each quadruple $xyab$ let $|\psi_{xyab}\rangle=P^x_a\otimes Q^y_b|\psi\rangle.$
This is an orthogonal representation of $G$, since for every edge $(xyab,x'y'a'b')$
either $P^x_aP^{x'}_{a'}=0$ or $Q^y_bQ^{y'}_{b'}=0$. 
Now for each $xyab$ consider the normalized vector  \index{orthogonal representation}
\[ 
|\psi'_{xyab}\rangle = \frac{|\psi_{xyab}\rangle}{||\psi_{xyab}||} = \
\frac{|\psi_{xyab}\rangle}{\sqrt{\langle\psi|P^x_a\otimes Q^y_b|\psi\rangle}}.
\]

We have that $\{\psi'_{xyab}\}_{xyab\in V}$
and $\psi$ are a feasible solution for the formulation~\eqref{eq:theta_def1} of $\vartheta(G)$.

We conclude 
\begin{align*}
\vartheta(G) & \geq \sum_{xyab\in V} |\langle\psi|\psi_{xyab}\rangle|^2  \\
& = \sum_{xyab \in V} \left| \frac{\langle\psi|\psi_{xyab}\rangle}{||\psi_{xyab}||} \right|^2 \\
& = \sum_{xyab \in V} \frac{\langle\psi|P^x_a\otimes Q^y_b|\psi\rangle^2}{\langle\psi|P^x_a\otimes Q^y_b|\psi\rangle} \\
& = \sum_{xyab \in V} \langle\psi|P^x_a\otimes Q^y_b|\psi\rangle \\
& = k \cdot \omega^*(\mathcal{G}).
\end{align*}
\end{proof}

We now have the following lower bound in terms of the quantum independence~number.\index{independence number!quantum}
\begin{theorem}
\label{theorem:alpha_q_lt_omega_q}
Let $\mathcal{G}$ be a non-local game with input sets $X$ and $Y$, uniform input distribution and associated graph $G=(V,E)$. Then 
\[
\omega^*(\mathcal{G}) \geq \frac{\alpha_{q}(G)}{|X\times Y|}
\]

\end{theorem}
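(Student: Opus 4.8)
The plan is to upgrade the classical construction of Theorem~\ref{theorem:omega_c} to the quantum setting: from a quantum independent set of $G$ of size $t:=\alpha_q(G)$ I will build an explicit entangled strategy for $\mathcal G$ on a maximally entangled state. Fix real projective measurements $\{P^i_v\}_{v\in V}$ for $i\in[t]$ witnessing the quantum independent set (I write the index as $i\in[t]$ to keep it apart from Alice's input $x$), so that $P^i_uP^j_w=0$ whenever $i\neq j$ and $(u,w)\in E$ or $u=w$, by~\eqref{eq:consistency_indep}; let $|\psi\rangle\in\C^d\otimes\C^d$ be maximally entangled, so that $\langle\psi|M\otimes N|\psi\rangle=\frac1d\Tr(MN)$ for real symmetric $M,N$.

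First I would define, for each input $x$ and output $a\in A$, the operator $\hat P^x_a$ to be the orthogonal projection onto $\operatorname{span}\bigcup_{i\in[t]}\bigcup_{y,b\,:\,xyab\in V}\operatorname{range}(P^i_{xyab})$, and symmetrically $\hat Q^y_b$ built from the vertices $x'yab$ with $x'\in X,\,a\in A$; note these are real symmetric since the $P^i_v$ are real. The point to check is that $\{\hat P^x_a\}_{a\in A}$ is a legitimate measurement: for $a\neq a'$ the spanning subspaces of $\hat P^x_a$ and $\hat P^x_{a'}$ are orthogonal, because $P^i_{xyab}P^j_{xy'a'b'}=0$ both when $i=j$ (distinct outcomes of one projective measurement) and when $i\neq j$ (the vertices $xyab$ and $xy'a'b'$ are adjacent in $G$, so~\eqref{eq:consistency_indep} applies). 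Hence the $\hat P^x_a$ are pairwise orthogonal projections, and since $\sum_a\hat P^x_a\preceq\I$ I may assume equality after enlarging one outcome by the orthogonal complement (this keeps all of them projections and only enlarges them); the same works for $\{\hat Q^y_b\}_{b\in B}$, using that $x'yab$ and $x''ya'b'$ are adjacent when $b\neq b'$.

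The heart of the matter — the quantum analogue of ``each independent-set vertex wins one question'' — is that for every winning quadruple $xyab\in V$ the projection $\Pi_{xyab}:=\sum_{i\in[t]}P^i_{xyab}$ (whose summands are pairwise orthogonal, by~\eqref{eq:consistency_indep} with $u=w$) satisfies $\Pi_{xyab}\preceq\hat P^x_a$ \emph{and} $\Pi_{xyab}\preceq\hat Q^y_b$, because each $\operatorname{range}(P^i_{xyab})$ lies inside both defining subspaces. Writing $\hat P^x_a=\Pi_{xyab}+R$ and $\hat Q^y_b=\Pi_{xyab}+S$ with $R,S\succeq0$ and expanding the product, every term is a trace of a product of positive semidefinite operators and hence nonnegative, so $\Tr(\hat P^x_a\hat Q^y_b)\ge\Tr(\Pi_{xyab}^2)=\rank(\Pi_{xyab})=\sum_{i\in[t]}\rank(P^i_{xyab})$.

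It then remains to sum over $V$. The winning probability of this strategy is $\frac1{|X\times Y|}\sum_{xyab\in V}\langle\psi|\hat P^x_a\otimes\hat Q^y_b|\psi\rangle=\frac1{|X\times Y|\,d}\sum_{xyab\in V}\Tr(\hat P^x_a\hat Q^y_b)$, which by the previous inequality is at least $\frac1{|X\times Y|\,d}\sum_{i\in[t]}\sum_{v\in V}\rank(P^i_v)=\frac1{|X\times Y|\,d}\sum_{i\in[t]}\rank(\I)=\frac{t}{|X\times Y|}$ (using that $\{P^i_v\}_{v\in V}$ is a projective measurement, so the ranks sum to $d$ for each $i$), giving $\omega^*(\mathcal G)\ge\alpha_q(G)/|X\times Y|$. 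The one genuinely non-routine step I expect is guessing the right coarse-graining — Alice must collapse all dependence on $i,y,b$ while keeping $a$, dually for Bob — and it is precisely the orthogonality of the resulting $\hat P^x_a$ (resp.\ $\hat Q^y_b$) where the game-graph structure and~\eqref{eq:consistency_indep} are used essentially; once that is in place, the domination $\hat P^x_a,\hat Q^y_b\succeq\Pi_{xyab}$ together with $\sum_{v\in V}\rank(P^i_v)=d$ makes the bound fall out.
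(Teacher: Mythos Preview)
Your proof is correct and follows essentially the same approach as the paper: you build the very same Alice and Bob measurements (projections onto the span of the relevant $P^i_{xyab}$) on a maximally entangled state, verify their orthogonality via the game-graph edges and~\eqref{eq:consistency_indep}, and lower-bound the success probability by the diagonal contribution $\sum_i P^i_{xyab}$. The only cosmetic difference is that where the paper isolates a small lemma showing $\bra{v}\supp(M+N)\ket{v}\geq\bra{v}\supp(M)\ket{v}$ for positive semidefinite $M,N$, you pass to traces first via $\langle\psi|M\otimes N|\psi\rangle=\tfrac1d\Tr(MN)$ and use $\Tr(AB)\geq 0$ for positive semidefinite $A,B$ --- an equally elementary and slightly more direct route to the same inequality.
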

To prove the theorem, we will use the following lemma.
\begin{lemma}
Let $M,N$ be positive semidefinite matrices. Then for any vector $\ket{v}$, we have that
\[\bra{v}\supp(M+N)\ket{v}\geq \bra{v}\supp(M)\ket{v},\]
where $\supp(M)$ denotes the projector onto the support (\emph{i.e.,}~the column space) of~$M$.
\label{lem:LinAlg}
\end{lemma}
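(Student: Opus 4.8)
The plan is to prove the stronger operator inequality $\supp(M+N)\succeq\supp(M)$, which immediately yields $\bra{v}\supp(M+N)\ket{v}\geq\bra{v}\supp(M)\ket{v}$ for every $\ket{v}$. The crux is a monotonicity property of supports of positive semidefinite matrices: the subspace $\supp(M)$ is contained in $\supp(M+N)$.

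First I would establish the kernel identity $\ker(M+N)=\ker(M)\cap\ker(N)$. The inclusion $\supseteq$ is immediate. For the converse, suppose $(M+N)\ket{w}=0$ and write $M=A^{\dagger}A$ and $N=B^{\dagger}B$ (a PSD matrix factors this way). Then $0=\bra{w}(M+N)\ket{w}=\norm{A\ket{w}}^{2}+\norm{B\ket{w}}^{2}$, so $A\ket{w}=B\ket{w}=0$ and hence $M\ket{w}=N\ket{w}=0$. Taking orthogonal complements gives $\supp(M)=\ker(M)^{\perp}\subseteq\ker(M+N)^{\perp}=\supp(M+N)$.

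Next, set $P=\supp(M)$ and $Q=\supp(M+N)$, orthogonal projectors with $\operatorname{range}(P)\subseteq\operatorname{range}(Q)$. This is equivalent to $QP=P$, and taking adjoints (projectors are Hermitian) to $PQ=P$. Hence for any $\ket{v}$ we have $P\ket{v}=P(Q\ket{v})$, so $\norm{P\ket{v}}\leq\norm{Q\ket{v}}$ since $P$ has operator norm at most $1$. Squaring and using that a projector satisfies $\bra{v}P\ket{v}=\bra{v}P^{\dagger}P\ket{v}=\norm{P\ket{v}}^{2}$ (and likewise for $Q$) gives $\bra{v}\supp(M)\ket{v}\leq\bra{v}\supp(M+N)\ket{v}$, as claimed.

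I do not anticipate any genuine obstacle; this is elementary linear algebra. The only step needing slight care is the implication ``$\bra{w}M\ket{w}=0\Rightarrow M\ket{w}=0$ for PSD $M$'', which the factorization $M=A^{\dagger}A$ handles cleanly (one could instead diagonalize $M+N$ and argue eigenspace by eigenspace, but that is messier).
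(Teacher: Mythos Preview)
Your proof is correct and follows essentially the same approach as the paper: both reduce the inequality to the subspace containment $\supp(M)\subseteq\supp(M+N)$ and then observe that $\bra{v}P\ket{v}=\norm{P\ket{v}}^{2}$ for a projector $P$. The only cosmetic difference is that the paper argues the containment by contradiction (finding $\ket{w}\in\ker(M+N)\setminus\ker(M)$ forces $\bra{w}N\ket{w}<0$), whereas you prove the equivalent kernel identity $\ker(M+N)=\ker(M)\cap\ker(N)$ directly via the factorizations $M=A^{\dagger}A$, $N=B^{\dagger}B$; both arguments are standard and of the same difficulty.
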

\begin{proof}
If $P$ is a projector onto a subspace $\Pi$ then $\bra{v}P\ket{v}$ is the squared length of the projection of $\ket{v}$ into $\Pi$. Hence, to prove the lemma it suffices to show that $\supp(M)\subseteq \supp(M+N)$, where by abusing the notation we use $\supp$ to denote the support itself (rather than the projection onto it). 

For contradiction, suppose that $\supp(M)\not\subseteq\supp(M+N)$. Then the orthogonal complement of $\supp(M)$ (i.e.~the nullspace $\operatorname{Null}(M))$ does not contain $\operatorname{Null}(M+N)$. Hence we can pick a vector $\ket{w}$ such that $(M+N)\ket{w} = 0$ but $M\ket{w} \neq 0$. This further implies that 
\[
  \bra{w} N \ket{w}  = \bra{w} (M+N) \ket{w} - \bra{w} M \ket{w} 
  = - \bra{w} M \ket{w} <0,
\]
 since $M$ is positive semidefinite and $M\ket{w}\neq 0$. This completes the proof as we have reached a contradiction with the initial assumption that $N$ is positive semidefinite.
\end{proof}

\begin{proof}[Proof of Theorem~\ref{theorem:alpha_q_lt_omega_q}.]
Given a quantum strategy $\{P_{xyab}^{i}\}$ for the independent set game\index{non-local game!independent set} on $G$ with parameter $t$, we construct a strategy to win the game $\mathcal{G}$ with probability at least $t/|X\times Y|$, as follows.

Players share a maximally entangled\index{entanglement!maximally entangled} state with local dimension $d$
(which is the dimension of the projectors above). On input $x,$ Alice
measures her half of the state using the projective measurement$\{P^x_a\}_{a\in A}\bigcup\{I-\sum_{a}P^x_a\}$,
where the individual elements are defined as follows: 
\[
  P^x_a=
  \supp\left( \sum_{\stackrel{yb}{xayb\in V}} \sum_i
  P_{xayb}^{i}\right),
\]
where we use $\supp(M)$ to denote the projector onto the image of $M$. 
We show that this is a valid projective measurement. For all $y,b,y',b'$ there is an edge $(xyab,xy'a'b')\in E$. Therefore in the strategy for the independent set game\index{non-local game!independent set} we have that for all $i,j$ each projector $P^i_{xyab}$ is orthogonal to $P^j_{xy'a'b'}$. 
Hence, for all $a\neq a'$ we have $P^x_a\cdot P^x_{a'}=0$. 
Bob constructs projectors $P^y_b$ similarly. 

Now we lower bound the quantum value of $\mathcal{G}$ as follows:
\begin{eqnarray*}
 &|X\times Y|\cdot\omega^*(\mathcal{G}) & \geq  \sum_{xyab\in V}\langle\psi|P^x_a\otimes P^y_b|\psi\rangle\\
 &  & =\sum_{xyab \in V}\langle\psi|
  \supp \Big(\sum_{i,j} \sum_{\stackrel{y'b'}{xay'b'\in V}}
  \sum_{\stackrel{x'a'}{x'a'yb\in V}}
  P_{xay'b'}^{i}\otimes P_{x'a'yb}^{j}\Big)
  |\psi\rangle,
\end{eqnarray*}
where we have used the fact that $\supp(M \otimes N) = \supp(M)\otimes\supp(N)$ for all matrices $M,N$ to obtain the last equality. Now by applying Lemma~\ref{lem:LinAlg}, we drop all the terms except the ones with $i=j,a=a',b=b',x=x'$ and $y=y'$, \mbox{and we have that}
\begin{align}
  |X\times Y|\cdot\omega^*(\mathcal{G}) & \geq
  \sum_{xyab \in V}\langle\psi|
  \supp \Big(\sum_{i} P_{xayb}^{i}\otimes P_{xayb}^{i}\Big) |\psi\rangle\\
  & = \sum_{xyab \in V}\langle\psi|
   \Big(\sum_{i} P_{xayb}^{i}\otimes P_{xayb}^{i}\Big) |\psi\rangle
  \label{eq:Projs}\\
  & = \sum_{xyab \in V} 
     \sum_i \frac{1}{d} \Tr(P_{xayb}^i)\label{eq:MaxEnt}\\
  & = \sum_i \frac{1}{d} \Tr(I_d)\label{eq:Measurement}\\
  & = \alpha_q(G).
\end{align}
In the above we have observed that $\supp(P+Q)=P+Q$ for mutually orthogonal projectors $P$ and $Q$ to get Expression~(\ref{eq:Projs}). We have used properties of $\ket{\psi} = \frac{1}{\sqrt{d}}\sum_i \ket{i,i}$ to obtain Expression~(\ref{eq:MaxEnt}). We have used the fact that, for all $i$, $\{P^i_{xayb} : \lambda(x,a,y,b)=1\}$ forms a measurement to obtain Expression~(\ref{eq:Measurement}).
\end{proof}

\subsubsection{Tightness of the lower bound}

Here we obtain an equality relation between the value of the game and the quantum independence number of the game graph, for a class of pseudo-telepathy games.

\begin{theorem}
\textup{Let $\mathcal{G}$ be a pseudo-telepathy game with a 0-1 valued
predicate~$\lambda$, admitting a quantum strategy consisting of a maximally entangled state $|\psi\rangle$ and pairwise commuting projectors.
Let $G$ be the corresponding game graph. Then, \[\omega^*(\mathcal{G})=\frac{\alpha_{q}(G)}{|X\times Y|}=1.\]}\end{theorem}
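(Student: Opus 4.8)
The plan is to prove the two equalities separately, exploiting the structure of a pseudo-telepathy game together with the special form of its quantum strategy. Since $\mathcal{G}$ is pseudo-telepathy, $\omega^*(\mathcal{G})=1$ by definition, so the only work is to establish $\alpha_q(G)/|X\times Y| = 1$, i.e.\ $\alpha_q(G) = |X\times Y| =: k$. For the easy direction, I would show $\alpha_q(G) \le k$: by Theorem~\ref{theorem:alpha_q_lt_omega_q} we have $\omega^*(\mathcal{G}) \ge \alpha_q(G)/k$, and since $\omega^*(\mathcal{G})=1$ this immediately gives $\alpha_q(G) \le k$. (One could also give a direct combinatorial argument, but invoking the already-proven theorem is cleanest.)

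The substance is the reverse inequality $\alpha_q(G) \ge k$, which I would prove by explicitly constructing a quantum independent set of $G$ of size $k$ from the given winning quantum strategy. Let $\ket{\psi}$ be the maximally entangled state and $\{P^x_a\}$, $\{Q^y_b\}$ the pairwise commuting projective measurements that win $\mathcal{G}$ with probability $1$; since these all commute, one may assume Alice's and Bob's projectors live on the same space and, after the standard symmetrization used in \cite{Roberson:2012}, that $Q^y_b = \overline{P^y_b}$ or can be taken equal to $P^y_b$ acting appropriately. Index the $k$ inputs of the independent set game by pairs $i=(x,y)\in X\times Y$. For input $i=(x,y)$ define the projective measurement on vertices $V$ by assigning to the vertex $xyab$ the projector $P^i_{xyab} := P^x_a \otimes Q^y_b$ (via the commuting structure this is a genuine projector and, summed over the winning quadruples compatible with $(x,y)$, gives the identity because winning with certainty forces $\sum_{ab:\,\lambda(x,y,a,b)=1} \langle\psi| P^x_a\otimes Q^y_b|\psi\rangle = 1$ and hence $\sum P^x_a\otimes Q^y_b$ acts as identity on $\ket{\psi}$, which one upgrades to an honest measurement on the whole space). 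The consistency condition~\eqref{eq:consistency_indep} then needs to be verified: for $i=(x,y)\ne i'=(x',y')$ and vertices $u=xyab$, $v=x'y'a'b'$ with $(u,v)\in E$ or $u=v$, one must check $P^i_u P^{i'}_v = 0$. When $(u,v)\in E$ this is the defining orthogonality $x=x', a\ne a'$ (giving $P^x_a P^{x'}_{a'}=0$) or $y=y', b\ne b'$, and when $u=v$ one uses $i\ne i'$ together with the fact that in a pseudo-telepathy game the outcome is a deterministic function of the input on the support of $\ket{\psi}$. This last point is where the commuting/maximally-entangled hypotheses really pull their weight.

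I expect the main obstacle to be handling the $u = v$ case of the consistency condition cleanly: one needs that the same vertex $xyab$ cannot be a legitimate output for two distinct inputs $(x,y)$ and $(x',y')$ in a way compatible with the projectors. This requires arguing that, against $\ket{\psi}$, the projectors $P^x_a$ for fixed $x$ are in "functional" position — essentially that a perfect quantum strategy for a pseudo-telepathy game with commuting projectors on a maximally entangled state is rigid enough that the outcome vertex is pinned down by the question pair — so that $P^i_{xyab}$ and $P^{i'}_{x'y'ab}$ with $i \ne i'$ have orthogonal supports. A secondary technical point is extending the $P^i_{xyab}$ from operators that are merely "measurements relative to $\ket{\psi}$" to bona fide projective measurements on the full Hilbert space, handled by the now-standard trick of enlarging the outcome set with a remainder element $I - \sum P^i_{xyab}$, exactly as in the proof of Theorem~\ref{theorem:alpha_q_lt_omega_q}, and checking that this remainder is annihilated by $\ket{\psi}$ so it does not interfere with counting. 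Once these are in place, the constructed object is a quantum independent set of size $k$, giving $\alpha_q(G)\ge k$ and closing the argument.
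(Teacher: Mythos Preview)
Your skeleton is exactly the paper's: one inequality comes directly from Theorem~\ref{theorem:alpha_q_lt_omega_q}, and for the other you build a quantum independent set of size $k=|X\times Y|$ by indexing the $k$ measurements by question pairs $(x,y)$ and assigning a projector only to those vertices $xyab$ whose question part matches. But the execution has a real gap, and the place you flagged as the obstacle is not one.

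First, the $u=v$ case of condition~\eqref{eq:consistency_indep} is automatic in this construction. A vertex $u=xyab$ carries its question pair in its label; hence for any $i'=(x',y')\neq(x,y)$ you have $P^{i'}_{u}=0$ by definition, and $P^{i}_{u}P^{i'}_{u}=0$ trivially. There is no need for rigidity or determinism of outcomes.

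The genuine problem is your choice $P^{i}_{xyab}:=P^{x}_{a}\otimes Q^{y}_{b}$. With the tensor product you only get
\[
\sum_{ab:\,\lambda(x,y,a,b)=1} P^{x}_{a}\otimes Q^{y}_{b}\;\leq\; I\otimes I,
\]
and equality need \emph{not} hold as operators; all you know is that it equals $1$ after sandwiching with $\ket{\psi}$. Adding a leftover outcome $I-\sum(\cdots)$ is not permitted here: by the definition of a quantum independent set the outcomes must be exactly the vertices of $G$, so a remainder outcome is illegal. This is also precisely where the commuting hypothesis has not yet been used in your argument (a tensor product of projectors is always a projector, commuting or not).

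The paper fixes this by working in a single register and taking the \emph{matrix product} $\Pi_{xyab}:=P^{x}_{a}Q^{y}_{b}$. Commutativity makes each $\Pi_{xyab}$ a projector. Because $\ket{\psi}$ is maximally entangled (and the projectors may be taken real), $\langle\psi|P^{x}_{a}\otimes Q^{y}_{b}|\psi\rangle=0$ for losing $(a,b)$ forces $P^{x}_{a}Q^{y}_{b}=0$ as an operator, so
\[
\sum_{ab:\,\lambda(x,y,a,b)=1} P^{x}_{a}Q^{y}_{b}=\sum_{a,b} P^{x}_{a}Q^{y}_{b}=\Big(\sum_{a}P^{x}_{a}\Big)\Big(\sum_{b}Q^{y}_{b}\Big)=I
\]
for every $(x,y)$. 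This gives honest projective measurements with outcomes in $V$, and the edge condition follows exactly as you wrote (if $x=x',a\neq a'$ then $P^{x}_{a}P^{x}_{a'}=0$, etc.). So your plan is right, but you should replace the tensor product by the commuting product and drop the detour through a remainder outcome and the $u=v$ worry.
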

\begin{proof}
From Theorem \ref{theorem:alpha_q_lt_omega_q} we have $\alpha_{q}(G)\leq|X\times Y|\cdot\omega^*(\mathcal{G})$.
We need to prove the other direction.

Let $\{P^x_a\},\{Q^y_b\}$ be the strategies that win the game $\mathcal{G}$
on $|\psi\rangle$. We have: 
\[
\sum_{xy}\pi(x,y)\sum_{ab : \lambda(xyab)=1}\langle\psi|P^x_a\otimes Q^y_b|\psi\rangle=1,
\]
so for all $(x,y)$ we must have

\[
\sum_{ab : \lambda(xyab)=1}\langle\psi|P^x_a\otimes Q^y_b|\psi\rangle=1
\]
and for all quadruples $(x,y,a,b)$ such that $\lambda(xyab)=0$ we
have $P^x_aQ^y_b=0$.

Let $\Pi_{xyab}=P^x_aQ^y_b.$ These are projectors thanks to the commutativity assumption. We observe:
\begin{enumerate}
\item For all $(x,y)$ we have 
\[
\sum_{ab : \lambda(xyab)=1}P^x_aQ^y_b=\sum_{\stackrel{ab}{}}P^x_aQ^y_b=\sum_{a}P^x_a\sum_{b}Q^y_b=I,
\]
where the second equality follows from $Q^y_b Q^y_{b'}=\delta_{bb'}$. 
\item For each edge $(x,y,a,b),(x',y',a',b')$ we have a collection of $t$ real-valued projective measurements $\{P_{v}^{x}\}_{v\in V}$ for all $x\in [t]$ that have the whole vertex set as outputs, 
\[
\Pi_{xyab}\Pi_{x'y'a'b'}=0,
\]
because if $x=x'$ and $a\neq a'$ then $P^x_aP^x_{a'}=0$, and if
$y=y'$ and $b\neq b'$ then $Q^y_b Q^y_{b'}=0$.
\end{enumerate}

Therefore, we can construct $|X\times Y|$ projective measurements
that are a winning strategy for the independent set game\index{non-local game!independent set} with \mbox{$t=|X\times Y|$} 
as follows. For each pair $(x,y)$ consider the projective
measurement $\{\Pi_{xyab}\}_{a,b:\lambda(xyab)=1}$ (and zero matrices
for the other vertices of the graph). The first observation above
proves that those are valid projective measurements; the second observation
shows that they respect the consistency condition \eqref{eq:consistency_indep}.
\end{proof}

\section{Concluding remarks and open problems}

We have formalized and discussed a novel approach for the study of non-local game in a combinatorial fashion. Work in progress on this approach relate to the easy generalization to more than 2 players, and the less-easy computation of graphs for the parallel repetition of games.

Our approach has ample room for improvement. Open questions include: 
\begin{enumerate}

\item Can we find a tighter lower bound for the entangled value of all games by using some variant of the quantum independence number, such as the one  in \cite{BBLPS13}?  
Alternatively, can we prove tightness of the current lower bound?

\item Can we find better lower bounds, for example using one of the variants of Lov\'asz theta number?

\item Can we characterize the class of games for which the Lov\'asz bound is tight? We know that the value of CHSH is exactly the theta number of its game graph (see \cite{CSW10}). Is this true for all the XOR games? This would reflect the fact that their value is easy to compute. 

\item Are there other graph parameters related to the classical and entangled values of specific classes of games, for example unique games?

\item We  have shown that for a class of pseudo-telepathy games that quantum players can win using commutative projective measurements on maximally entangled state, this bound is tight. 
A similar class of games is shown in \cite{MSS13} to be in  one-to-one correspondence with a generalization of Kochen-Specker sets. It is not clear to us if those two results together could be used to prove something stronger. 
Perhaps the whole class could be interpreted as pseudo-telepathy games based on some graph parameter (maybe the homomorphism games in \cite{Roberson:2012}) and the relationship to the quantum independence number would be a consequence of this.

\end{enumerate}

\paragraph*{Acknowledgements}
The authors thank the referees of TQC 2014 for useful comments.
Laura Man\v{c}inska is supported by the MOE Tier 3 Grant ``Random numbers from quantum processes'' (MOE2012-T3-1-009).
G.\,Scarpa is supported by the European Commission project RAQUEL(323 970). Part of this work was done while G.\,Scarpa was a PhD student at CWI, supported by Ronald de Wolf's VIDI grant from NWO. S.\,Severini is supported by the Royal Society and EPSRC.


\newcommand{\etalchar}[1]{$^{#1}$}

\end{document}